\documentclass[11pt]{article}

\usepackage[T1]{fontenc}
\usepackage{lmodern}
\usepackage[margin=1in]{geometry}
\usepackage{amsmath,amssymb,amsthm,mathtools}
\usepackage{booktabs,longtable,array,tabularx}
\usepackage{enumitem}
\usepackage{xcolor}
\usepackage{microtype}
\usepackage{xurl}
\usepackage[colorlinks=true,linkcolor=blue!55!black,citecolor=blue!55!black,urlcolor=blue!55!black]{hyperref}

\newtheorem{theorem}{Theorem}[section]
\newtheorem{proposition}[theorem]{Proposition}

\theoremstyle{definition}

\theoremstyle{remark}
\newtheorem{remark}[theorem]{Remark}

\newcommand{\R}{\mathbb R}
\newcommand{\conv}{\operatorname{conv}}

\newcommand{\eps}{\varepsilon}
\newcommand{\NS}{\operatorname{NS}}

\setlist{itemsep=2pt,topsep=5pt}
\title{Counterexamples to GNS Conjecture 1.8}
\author{Xiuqing Duan\\
  {\small School of Physical and Mathematical Sciences, Nanyang Technological University, Singapore}\\
  {\small\texttt{xiuqing.duan@ntu.edu.sg}}}
\date{}

\begin{document}
\maketitle

\begin{abstract}
We disprove both parts of Conjecture 1.8 of Gabrielov, Novikov, and Shapiro (GNS). 
For the Coulomb potential, a generic rational configuration of 24 unit charges in $\mathbb{R}^3$ 
has at least 18 nondegenerate critical points of Morse index one but exactly 14 effective Voronoi one-cells. 
A separate proper-line construction has at least two nondegenerate minima but only one 
relatively effective zero-dimensional cell intersection. For each positive subspace dimension,
product suspension gives a proper-subspace counterexample.
\end{abstract}

\tableofcontents

\section{Statement, conventions, and conclusions}

Let $A_1,\dots,A_\ell\in\R^n$ be distinct sites.  In the notation of
Gabrielov--Novikov--Shapiro (GNS)~\cite{GNS}, the potential is
\begin{equation}\label{eq:potential}
 V_\alpha(x)=\sum_{i=1}^{\ell}\zeta_i\rho_i(x)^{-\alpha},
\qquad\text{where}\qquad \rho_i(x)=|x-A_i|^2,
 \qquad \alpha>0.
\end{equation}
The charge sites are excluded from the domain.  Conjecture 1.8 in~\cite{GNS} concerns
positive unit charges, so $\zeta_i=1$.  If $a_\alpha^j$ denotes the number of
critical points of the potential with $j$ negative Hessian directions, and $\sharp^j$ the number of effective
Voronoi cells of dimension $j$ in the Voronoi diagram of the considered
configuration, part (a) of Conjecture 1.8 asserts that, for every generic configuration of unit point charges,
\begin{equation}\label{eq:ambient-conj}
 a_\alpha^j\leq\sharp^j\qquad(\alpha\geq1/2).
\end{equation}
Part (b) of Conjecture 1.8 asserts, for an affine subspace $L$ generically intersecting the
diagram,
\begin{equation}\label{eq:relative-conj}
 a_{\alpha,L}^j\leq\sharp_L^j,
\end{equation}
where the Hessian and index are restricted to the tangent bundle $T L$, and $\sharp_L^j$ counts
cells $S$ for which $\dim(S\cap L)=j$ and which are effective relative to
$L$.  The index ranges are $0\leq j\leq n$ and $0\leq j\leq\dim L$,
respectively. The printed part (b) does not restate the range of $\alpha$.
Our counterexamples use $\alpha=1/2$, so they refute both the reading
$\alpha>0$ and the inherited range $\alpha\geq1/2$.

Edelsbrunner--Fillmore--Oliveira (EFO)~\cite{EFO} write
$V_p=\sum_i|x-A_i|^{-p}$.  For unit charges,
\begin{equation}\label{eq:conversion}
 p=2\alpha,
\end{equation}
so EFO's Coulomb case $p=1$ is exactly GNS $\alpha=1/2$.

Our main result is as follows:
\begin{theorem}[]\label{thm:main}
Conjecture 1.8(a), as printed, is false.  Conjecture 1.8(b) is also false,
independently of whether its ambiguous containment notation admits
$L=\R^n$, because a proper-line counterexample is given below.  For each
integer $d\geq1$, there exists a configuration of unit charges in $\R^{d+2}$
and a proper affine $d$-plane $L_d$ meeting its Voronoi diagram generically
for which the relative inequality fails.  Under the standard zero-manifold
Morse convention, the relative inequality is true for $\dim L=0$.
\end{theorem}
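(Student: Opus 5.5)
The theorem bundles four claims, and I would prove each one separately. For part (a), I would exhibit an explicit rational configuration of $24$ unit charges in $\R^3$ and use $\alpha=1/2$, the Coulomb case $p=1$ of \eqref{eq:conversion}. The Voronoi side is exact combinatorics. Compute the Voronoi diagram in rational arithmetic and decide effectiveness of each one-cell exactly. A Voronoi edge is effective when the segment joining the sites whose bisector it lies in crosses the edge's relative interior, which is an exact linear test over $\mathbb Q$. This gives $\sharp^1=14$. Genericity reduces to finitely many polynomial nonvanishing conditions, which I would check exactly on the rational data: no five cospherical sites, no four coplanar sites, and a nondegenerate diagram.

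The critical-point side needs a rigorous lower bound $a^1_{1/2}\ge 18$. For each of $18$ candidate points, found numerically by Newton's method on $\nabla V_{1/2}$, I would certify a true zero of the gradient. The tool is an interval Newton/Krawczyk test on a small box $B$: the Krawczyk image lies in the interior of $B$, which yields a unique zero in $B$. I would then enclose the Hessian over $B$ and show it has exactly one negative eigenvalue with no eigenvalue near zero. One way is to show the interval Hessian has constant signature via Sylvester's criterion on a suitable congruence; another is to enclose the eigenvalues by Gershgorin after an approximate diagonalization. Pairwise disjointness of the $18$ boxes makes the points distinct. Since the conjecture only needs $a^1>\sharp^1$, a lower bound suffices, and I need no global enumeration of critical points. \textbf{This certification is the main obstacle.} The Hessian of the Coulomb sum is stiff near sites, so I would choose the configuration so that the index-one points stay well away from the charges, and I would use tight boxes (radius around $10^{-8}$) so the interval enclosures stay sharp.

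For part (b) on a proper line, I would use the same method in a configuration engineered around a line $L$. The simplest design places sites symmetrically so that the restriction $V|_L$ has two certified nondegenerate local minima. A one-dimensional interval argument suffices here: a sign change of the derivative on an interval, plus positivity of the second derivative on that interval, gives a unique nondegenerate minimum. I would arrange that only one zero-dimensional intersection $S\cap L$ is relatively effective, again checked by exact rational tests. For generic intersection I would verify that $L$ avoids all Voronoi faces of codimension $\ge 2$ and is transverse to the facets. If the symmetry makes the configuration nongeneric, I would perturb it by a small rational amount, and the certified boxes persist under the perturbation.

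For the suspension to $d\ge1$, take the line configuration $\{A_i\}\subset\R^{d+2}$ viewed in $\R^2\times\R^d$, replicated by a product construction. Alternatively, choose $L_d=L\times\R^{d-1}$ with an additional arrangement making the extra directions split. A clean route: place the sites in $\R^3\times\{0\}$ and set $L_d=L\times\R^{d-1}$, with $L$ the line in $\R^3$ from the previous step. The restriction to $L_d$ is then even and decreasing in the extra variables $y$, since $\rho_i$ increases by $|y|^2$. So the critical points of $V|_{L_d}$ are exactly the critical points of $V|_L$ at $y=0$, each with $d-1$ additional negative directions and index raised by $d-1$. On the diagram side, relative cells are products of the line cells with $\R^{d-1}$, so dimensions shift by $d-1$ and effectiveness is preserved. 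The two minima therefore become two index-$(d-1)$ points against one relative $(d-1)$-cell, and a line cell gives the needed deficit. I would check that the ambient dimension $d+2$ in the statement matches the chosen construction; it may need one site dimension fewer, which can be absorbed by adjusting the base configuration.

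Finally, take $\dim L=0$. Then $L$ is a point $x$ in general position, so it lies in the interior of a single Voronoi cell, and the zero-manifold convention makes $x$ itself one critical point of index $0$. Relative effectiveness of that cell with respect to a point should hold trivially from the definition, giving $a^0_{\alpha,L}=1\le 1=\sharp^0_L$.
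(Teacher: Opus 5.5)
\textbf{The missing constructions.} Your verification framework is the same as the paper's: exact rational Delaunay/Voronoi predicates for the cell counts, Krawczyk boxes for the critical points, a sign change of $V_L'$ together with $V_L''>0$ on the line, the product suspension, and the trivial point case. Your suspension step is correct as written, and since $\R^3\times\R^{d-1}=\R^{d+2}$ exactly, you do not need to adjust the base configuration. The genuine gap is that you never produce the counterexamples. For this theorem they are the whole mathematical content; the certification you describe is standard once a configuration is in hand.

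\textbf{Part (a).} Nothing in your plan explains why any configuration should have $a^1>\sharp^1$. The paper starts from the EFO truncated octahedron $\mathcal T$ (signed permutations of $(0,1,2)$). There EFO observed 18 index-one Coulomb equilibria against only 14 one-cells, namely the normal rays of the 14 facets. That configuration is not GNS-generic: the origin is a Voronoi vertex with all 24 sites nearest, and $D^2V(0)=0$. Resolving this degeneracy can create new effective one-cells. So the rational perturbation had to be chosen to keep exactly 14 effective while the 18 index-one points persist.

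Two details of your check for (a) also need fixing. First, a Voronoi one-cell in $\R^3$ is dual to a Delaunay \emph{triangle}, not a segment. It is effective iff the circumcenter of its three nearest sites lies in that triangle and every other site is strictly farther from it. The segment test you describe is the one for two-cells. Second, GNS genericity also requires each cell to avoid the relative boundary of $\conv\NS(S)$ (for example, no right Delaunay triangles). The conditions ``no five cospherical, no four coplanar'' do not give this. For the index, harmonicity at $\alpha=1/2$ in $\R^3$ gives $\operatorname{tr}H=0$, so $\det H<0$ on the box already forces exactly one negative eigenvalue; you do not need Sylvester or Gershgorin arguments.

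\textbf{Part (b).} Saying ``place sites symmetrically so that $V|_L$ has two minima'' and ``arrange that only one crossing is relatively effective'' restates the two requirements. It gives no mechanism achieving both at once. The missing idea is to decouple the nearest-site geometry along $L$ from the potential. Take $L=\{(0,t,t)\}$ and keep $\pm p=\pm(0,1,2)$ fixed. Push the other 22 sites of $\mathcal T$ off $L$ by scaling $v^\perp$ by $1001/1000$, leaving $\pi_Lv$ unchanged. Then for every $v\neq\pm p$ one has $q_v>q_p$ on $t\ge0$ and $q_v>q_{-p}$ on $t\le0$. So $L$ meets only the two singleton cells of $\pm p$ and their bisector at $t=0$, all effective, which gives $\sharp_L^0=1$. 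Meanwhile all 24 charges still shape $V_L$, which has nondegenerate minima in $(0.62,0.63)$ and $(-0.63,-0.62)$.

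Your fallback of perturbing away nongenericity is exactly what loses control of the count. For the unperturbed $\mathcal T$, this line lies in the bisector of $(0,1,2)$ and $(0,2,1)$ and passes through the 24-fold vertex. A generic small perturbation says nothing about how many crossings end up effective. The paper needs only generic intersection of the pair $(\{B_v\},L)$. The site configuration $\{B_v\}$ itself is not globally generic and need not be perturbed further.
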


The proof is given in Sections~\ref{sec:ambient}--\ref{sec:relative}. The
certifying scripts and deterministic machine-readable outputs accompany this
paper in \path{research/certification/}.

\subsection{Exact Voronoi terminology}

A Voronoi cell $S$ is the relatively open set of points having
exactly the same nearest-site set $\NS(S)$. Its Delaunay cell is $\conv\NS(S)$.  GNS
call $S$ effective when
\begin{equation}\label{eq:effective}
 S\cap\conv\NS(S)\ne\varnothing.
\end{equation}
GNS call a configuration generic when every codimension-$k$ cell has
exactly $k+1$ nearest sites and avoids the boundary of $\conv\NS(S)$~\cite[p.~446]{GNS}. Here boundary is taken relative to the
affine span of that convex hull. This interpretation is necessary for
consistency: a full-dimensional Voronoi cell contains its singleton dual, whose relative boundary is empty.

For an affine subspace $L$, let $\pi_L$ be orthogonal projection.  A cell is
effective relative to $L$ when
\begin{equation}\label{eq:relative-effective}
 S\cap\conv\bigl(\pi_L\NS(S)\bigr)\ne\varnothing.
\end{equation}
The pair of the sites and $L$ is generic when $L$ meets every encountered
cell transversely, an encountered codimension-$k$ cell has $k+1$ nearest
sites, and the intersection avoids the relative boundary of the projected
hull.

GNS separately introduce the condition that all critical points be
nondegenerate.  It is not part of their definition of a generic site
configuration.  All critical points counted below are nondegenerate, and
Section~\ref{sec:global-morse} also treats the stronger global-Morse reading.

\subsection{What GNS Theorem 1.7 does prove}

After normalizing the common sign to positive, GNS Theorem 1.7 gives, for
each fixed generic configuration, a configuration-dependent $\alpha_0$ such
that the effective-cell
correspondence and index equality hold for $\alpha\geq\alpha_0$.  The proof
in their Sections 2.4--2.5 is a large-exponent localization argument.  It
does not supply a configuration-independent threshold or monotonicity down
to $1/2$.  The printed theorem says ``same sign,'' but for all-negative
charges multiplication of the potential by $-1$ preserves critical
locations while complementing ordinary Morse indices; its index statement
therefore requires normalization of the common sign to positive \cite[Section~2.4]{GNS}.

The printed statement of Theorem~1.7(b) lists only positive-codimension
ambient cells. In Section~2.5, however, GNS also include a restricted
local maximum associated with each effective full-dimensional cell
whose site lies off $L$; if the site lies on $L$, its projection is
excluded as a charge singularity. This discrepancy does not affect
either counterexample.

\section{The symmetric EFO configuration}\label{sec:efo}

Up to similarity, the vertices of the regular truncated octahedron used by
EFO are
\begin{equation}\label{eq:trunc-sites}
 \mathcal T=\{\text{signed coordinate permutations of }(0,1,2)\}.
\end{equation}
The body-centered-cubic Voronoi normalization is $\mathcal T/4$.  Scaling
preserves critical counts and Hessian indices.

Every $a\in\mathcal T$ has $|a|^2=5$, and
\begin{equation}
 |x-a|^2=|x|^2+5-2x\cdot a.
\end{equation}
Thus the nearest sites at $x$ form the exposed face of the truncated
octahedron maximizing $x\cdot a$.  Its face lattice consists of one
three-cell, 14 facets, 36 edges, and 24 vertices.  Exact normal-cone
intersections show that all are effective, giving
\begin{equation}\label{eq:symm-counts}
 (\sharp^0,\sharp^1,\sharp^2,\sharp^3)=(1,14,36,24).
\end{equation}

The four observed noncentral critical orbits have representatives and
indices shown in Table~\ref{tab:efo-orbits}.  These exploratory values
independently reproduce EFO's 18 index-one and 36 index-two observations;
they are not used as certificates for the new theorem.

\begin{table}[ht]
\centering
\caption{Reconstructed noncentral EFO orbits in normalization
\eqref{eq:trunc-sites}.}\label{tab:efo-orbits}
\begin{tabular}{@{}lrr@{}}
\toprule
Representative & Orbit size & Index\\
\midrule
$(1.494626559931,0,0)$ & 6 & 1\\
$(0.626750041258,0.626750041258,0)$ & 12 & 1\\
$(1.198510470821,1.198510470821,0)$ & 12 & 2\\
$(1.656799953819,0.338792038314,0.338792038314)$ & 24 & 2\\
\bottomrule
\end{tabular}
\end{table}

The origin is critical by central symmetry.  Since
\begin{equation}
 \sum_{a\in\mathcal T}aa^T=40I,
\end{equation}
its Hessian is exactly
\begin{equation}
 D^2V(0)=\frac{3}{5^{5/2}}40I-\frac{24}{5^{3/2}}I=0.
\end{equation}
The center is degenerate.  
EFO present this configuration as a counterexample \cite[Result~3 and Section~4.2]{EFO}. They report 18 index-one and 36 index-two equilibria
and indicate that equilibria on suitable symmetry lines can be established by the method illustrated for the cube in their Appendix~A.
The configuration is not GNS-generic, since the origin is a zero-dimensional Voronoi
cell with all 24 sites nearest, while GNS genericity in $\R^3$ requires four. EFO do not supply a perturbation
argument establishing the required genericity together with the effective-cell count. The next section gives an explicit rational
perturbation and verifies both the effective-cell count and the critical-point lower bound.

\section{An explicit generic counterexample to part (a)}\label{sec:ambient}

\subsection{Rational site coordinates}

We use EFO's truncated-octahedron configuration as a starting point.
Resolving its Voronoi degeneracies may increase the effective-cell
count, so the perturbation was selected by an exploratory search
for a small effective-cell count and then checked exactly.
The resulting configuration has 14 effective one-cells and
at least 18 certified index-one critical points.

Let $D=10^6$ and $A_i=P_i/D$, where the integer numerator vectors are listed
in Table~\ref{tab:sites}.  They are a small, nonsymmetric rational
perturbation of \eqref{eq:trunc-sites}.  Put a positive unit charge at every
site and take $\alpha=1/2$.

\begin{table}[ht]
\centering
\caption{Exact numerator vectors $P_i$ for the ambient counterexample. The left and right blocks list $P_1,\ldots,P_{12}$ and
$P_{13},\ldots,P_{24}$, respectively.}
\label{tab:sites}
\begin{tabular}{@{}l r r r @{\hspace{2.4em}} l r r r@{}}
\toprule
$P_i$ & $P_{i,1}$ & $P_{i,2}$ & $P_{i,3}$ & $P_i$ & $P_{i,1}$ & $P_{i,2}$ & $P_{i,3}$\\
\midrule
$P_{1}$&-2009085&-1005649&-9966 & $P_{13}$&-8309&1003557&-1991347\\
$P_{2}$&-2002386&4264&-1000270 & $P_{14}$&-5488&999768&1997623\\
$P_{3}$&-1991672&-513&991417 & $P_{15}$&-9821&2005855&-995940\\
$P_{4}$&-2007006&996017&-3734 & $P_{16}$&-1670&2006121&990484\\
$P_{5}$&-990323&-1993944&-4600 & $P_{17}$&1001417&-1991860&-4151\\
$P_{6}$&-993116&-2748&-2006549 & $P_{18}$&1008033&-2028&-2003354\\
$P_{7}$&-994179&6569&1991471 & $P_{19}$&1002729&6204&2003524\\
$P_{8}$&-991702&2006224&4074 & $P_{20}$&993695&2005154&-4101\\
$P_{9}$&-9686&-1999988&-996151 & $P_{21}$&1994157&-1001408&8489\\
$P_{10}$&257&-2003478&1005873 & $P_{22}$&1996376&2282&-1003636\\
$P_{11}$&9701&-998695&-2005029 & $P_{23}$&2003253&37&994282\\
$P_{12}$&-7545&-1002162&2000615 & $P_{24}$&2004841&991779&-3731\\
\bottomrule
\end{tabular}
\end{table}

The potential is
\begin{equation}\label{eq:coulomb}
 V(x)=\sum_{i=1}^{24}|x-A_i|^{-1}.
\end{equation}

\subsection{Exact Delaunay and effectiveness computation}

The verifier \path{exact_voronoi_certificate.py} uses only integers and
rational numbers.  It checks all $\binom{24}{4}=10626$ orientation
determinants and all $\binom{24}{5}=42504$ five-point in-sphere determinants.
None vanishes.  The minimum absolute raw values are
\begin{equation}\label{eq:predicate-margins}
 76710796042601,\qquad
 7646698260706286959940590,
\end{equation}
respectively.  After the $D^{-1}$ coordinate scaling these become
$7.6710796042601\cdot10^{-5}$ and
$7.646698260706286959940590\cdot10^{-6}$.

Every four-subset is tested against all remaining sites.  Exactly 67
tetrahedra have a strictly empty circumsphere.  Taking all faces gives the
complete Delaunay triangulation:
\begin{equation}\label{eq:delaunay-counts}
 67\ \text{tetrahedra},\quad156\ \text{triangles},\quad
 112\ \text{edges},\quad24\ \text{vertices}.
\end{equation}

For each simplex, the verifier solves the circumcenter and its barycentric
coordinates over $\mathbb Q$ and compares its squared distance with every
other site.  A Delaunay simplex and its dual Voronoi cell have complementary
orthogonal affine spans; their only possible meeting is the circumcenter.
Therefore exact barycentric and nearest-site signs decide effectiveness.
They give
\begin{equation}\label{eq:generic-counts}
 (\sharp^0,\sharp^1,\sharp^2,\sharp^3)=(1,14,36,24).
\end{equation}
The 14 effective one-cells are dual to the zero-based triangles
\begin{align*}
 &(0,1,2),(0,4,5),(2,4,11),(2,6,7),(3,12,14),(4,9,16),\\
 &(5,12,17),(6,11,18),(7,14,19),(10,16,21),(11,18,20),\\
 &(12,19,21),(18,19,23),(20,21,23).
\end{align*}
All decisions are strict.  Two global strictness margins, taken over all 156 Delaunay triangles, are
\begin{equation}
 \min|\text{triangle barycentric coordinate}|=
 \frac{63982323834490170008}{1358453353382550052924069}>0,
\end{equation}
and
\begin{equation}\label{eq:triangle-margin}
 \min|\text{triangle nearest-site margin}|=
 \frac{3441751615753642410984725576857291}
 {12203706291298804970211929}>0.
\end{equation}
Here the margin in \eqref{eq:triangle-margin} is measured in
numerator-coordinate squared-distance units.  For the sites $A_i=P_i/D$, the
corresponding margin is $D^{-2}$ times the displayed value, with
$D^2=10^{12}$.
Thus every codimension-$k$ cell has $k+1$ nearest sites and no cell meets the
relative boundary of its Delaunay simplex.  The configuration is GNS-generic.

\subsection{Krawczyk certificates for 18 critical points}

For \eqref{eq:coulomb},
\begin{equation}\label{eq:gradient}
 F(x)=\nabla V(x)=-\sum_i\frac{x-A_i}{|x-A_i|^3},
\end{equation}
and
\begin{equation}\label{eq:hessian}
 H(x)=D^2V(x)=\sum_i\left(
 \frac{3(x-A_i)(x-A_i)^T}{|x-A_i|^5}
 -\frac{I}{|x-A_i|^3}\right).
\end{equation}
The machine-readable certificate contains 18 rational 75-digit centers
$c_s$.  Let
\begin{equation}
 X_s=c_s+[-10^{-35},10^{-35}]^3,\qquad1\leq s\leq18.
\end{equation}
At 256-bit precision, outward-rounded Arb arithmetic~\cite{Johansson} encloses $F(c_s)$ and
$[H](X_s)$.  For a rational point matrix $Y_s$, the verifier checks its
determinant exactly nonzero and evaluates
\begin{equation}\label{eq:krawczyk}
 K_s=c_s-Y_sF(c_s)+(I-Y_s[H](X_s))(X_s-c_s).
\end{equation}
The primary 256-bit Krawczyk test compares its image with
enclosures of the exact rational endpoints of $X_s$, proving
\begin{equation}\label{eq:krawczyk-inclusion}
 K_s\subset\operatorname{int}X_s.
\end{equation}
The independent 320-bit interval-Newton and 512-bit Krawczyk
verifiers described in Section~\ref{sec:audit} provide separate
exact-endpoint checks.
The Krawczyk theorem~\cite[Theorem~13.3]{Rump} yields a unique zero of $F$ in each box.  Exact rational
comparisons prove the boxes pairwise disjoint.  The minimum certified squared
distance from any box to a site exceeds $1.17$, and the maximum Krawczyk
image-radius/box-radius ratio is below $2.0\cdot10^{-32}$.

Interval evaluation also proves $\det H(X_s)<0$ for every $s$; the upper
endpoint closest to zero remains below $-7.39\cdot10^{-4}$.  Since
\eqref{eq:coulomb} is harmonic away from its sites,
$\operatorname{tr}H=0$ exactly.  A nonsingular symmetric $3\times3$ matrix
with trace zero and negative determinant has exactly one negative
eigenvalue.  All 18 critical points have Morse index one.

\begin{proof}[Proof of the first assertion of Theorem~\ref{thm:main}]
The interval boxes prove $a^1_{1/2}\geq18$.  The exact Voronoi certificate
proves $\sharp^1=14$ for the same rational sites.  Hence
\begin{equation}
 a^1_{1/2}\geq18>14=\sharp^1,
\end{equation}
contradicting \eqref{eq:ambient-conj}.  Additional critical points cannot
weaken this lower-bound violation.
\end{proof}

\subsection{An open chamber around the certified configuration}

Every orientation, in-sphere, barycentric, nearest-site, Krawczyk,
Hessian-determinant, and site-separation assertion above is a strict
inequality.  There are finitely many, and all depend continuously on the
sites on the certified boxes.  Therefore one open neighborhood of the
displayed rational site vector simultaneously preserves the 18 critical
points and indices, the complete Delaunay triangulation, every effectiveness
decision, and the strict inequality $18>14$.  No Delaunay flip or newly
effective triangle can occur there without a certified predicate vanishing.

\subsection{The stronger global-Morse reading}\label{sec:global-morse}

Even an added demand that every critical point be nondegenerate does not
rescue the conjecture.  Let $\mathcal P$ be the manifold of ordered distinct
site configurations and define, for $x\ne A_i$,
\begin{equation}
 \Phi(A,x)=\nabla_x\sum_i|x-A_i|^{-1}.
\end{equation}
For one site, the $x$-Hessian has radial and tangential eigenvalues
$2r_i^{-3},-r_i^{-3},-r_i^{-3}$.  Since
$D_{A_i}\Phi$ is its negative, it is invertible.  Thus $\Phi$ is a
submersion.  Parametric transversality makes the parameters for which
$\Phi_A$ is transverse to zero dense.

This global statement is safe on the punctured noncompact domain.  Near a
site, its self-field of size $r^{-2}$ dominates the uniformly bounded other
fields.  At infinity,
\begin{equation}
 \Phi(A,x)=-24x/|x|^3+O(|x|^{-3})
\end{equation}
uniformly in a bounded parameter neighborhood.  All critical points lie in a
common compact set separated from the sites.  A transverse fiber has finitely
many roots, and the all-Morse property is open by compactness and the
implicit function theorem.  Intersecting this dense open set with the open
certificate chamber gives a nonempty open set of all-Morse counterexamples;
it contains rational site vectors.

\section{The relative inequality}\label{sec:relative}

\subsection{The conditional full-space specialization}

GNS Conjecture 1.8(b) says ``any affine subspace'' and does not say ``proper.''
However, the preceding definition writes $L\subset\R^n$, whereas GNS use
$L\subseteq\R^n$ for the affine span of the sites on journal p.~450, and
they declare no consistent convention distinguishing the two glyphs.
Accordingly, the printed formulation does not explicitly clarify whether
the ambient space itself is admitted as $L$.

Under the inclusive convention, take $L=\R^n$.  Transversality is automatic,
projection is the identity, effectiveness is ordinary effectiveness,
$\dim(S\cap L)=\dim S$, and $V|_L=V$ with the same Hessian.  Therefore
\begin{equation}
 a_{\alpha,L}^j=a_\alpha^j,\qquad \sharp_L^j=\sharp^j.
\end{equation}
Under that convention, part (a) is the full-space instance of part (b).
Under a proper-subspace convention it is not.  In either interpretation,
part (b) is false by the independent proper-line counterexample below.

\subsection{A proper rational line}

The construction separates nearest-site geometry from the full
potential. We keep two opposite sites fixed and move the other
22 slightly away from the chosen line, without changing their
projections. The fixed pair then determines the nearest-site
diagram along the whole line, while all 24 charges contribute
to a restriction with two strict minima.

We now give an independent proper-subspace counterexample.  Let $\mathcal T$
be \eqref{eq:trunc-sites}, and set
\begin{equation}
 L=\{(0,t,t):t\in\R\},\qquad p=(0,1,2),\qquad\eps=\frac1{1000}.
\end{equation}
For $v=(x,y,z)$, let
\begin{equation}
 \pi_Lv=\left(0,\frac{y+z}{2},\frac{y+z}{2}\right),
 \qquad v^\perp=v-\pi_Lv,
\end{equation}
and define rational sites
\begin{equation}\label{eq:line-sites}
 B_v=\begin{cases}
 v,&v=\pm p,\\[2mm]
 \pi_Lv+\dfrac{1001}{1000}v^\perp,&v\ne\pm p.
 \end{cases}
\end{equation}
The 24 sites are distinct, centrally symmetric, and off $L$.

Put $s=y+z$.  Along $L$,
\begin{equation}
 q_v(t)=|(0,t,t)-B_v|^2=2t^2-2st+c_v.
\end{equation}
With
\begin{equation}
 \kappa=(1001/1000)^2-1=2001/10^6,
\end{equation}
we have $c_{\pm p}=5$, while for $v\ne\pm p$,
\begin{equation}
 c_v=5+\kappa\left(5-\frac{s^2}{2}\right)>5.
\end{equation}
For $v\ne\pm p$, since $-3\leq s\leq3$,
\begin{align}
 q_v(t)-q_p(t)&=2(3-s)t+(c_v-5)>0&& (t\geq0),\label{eq:right-envelope}\\
 q_v(t)-q_{-p}(t)&=-2(s+3)t+(c_v-5)>0&& (t\leq0).
\end{align}
The remaining comparison is $q_{-p}(t)-q_p(t)=12t$.
Thus $p$ is uniquely nearest for $t>0$, $-p$ for $t<0$, and exactly
$\{p,-p\}$ is nearest at zero.  These inequalities cover the entire
unbounded line and exclude every other cell.

The middle cell is locally the bisector $x\cdot p=0$.  Its normal $p$ has
dot product 3 with the line direction $(0,1,1)$, so the crossing is
transverse.  The projected parameters of $p,-p$ are $3/2,-3/2$, and zero is
strictly inside their segment.  The two singleton projections lie strictly
inside their corresponding rays.  Hence the pair satisfies every GNS
generic-intersection and boundary clause, all three cells are effective, and
\begin{equation}\label{eq:line-sharp}
 \sharp_L^0=1,\qquad\sharp_L^1=2.
\end{equation}
Here the genericity hypothesis is the generic-intersection condition
in GNS Conjecture~1.8(b), as in their Theorem~1.7(b), rather than
global genericity of the site configuration. The configuration
$\{B_v\}$ is not globally GNS-generic: for example,
$(-18009/8008000,0,0)$ is a Voronoi vertex with six nearest sites.
It lies outside $L$ and does not affect the three cells met by $L$.

\subsection{Two exact restricted minima}

The restricted squared distances $q_v$ fall into the following classes:
\begin{center}
\begin{tabular}{@{}ccc@{}}
\toprule
$s$ & $c$ & multiplicity\\
\midrule
$3$&$5$&1\\
$3$&$5+\kappa/2$&1\\
$2$&$5+3\kappa$&4\\
$1$&$5+9\kappa/2$&6\\
\bottomrule
\end{tabular}
\end{center}
together with the four rows obtained by replacing $s$ by $-s$.  Therefore
\begin{equation}
 V_L(t)=\sum_{(s,c)}m_{s,c}(2t^2-2st+c)^{-1/2},
\end{equation}
\begin{equation}
 V_L'(t)=\sum_{(s,c)}m_{s,c}
 \frac{s-2t}{(2t^2-2st+c)^{3/2}},
\end{equation}
and, after simplifying the numerator,
\begin{equation}\label{eq:line-second}
 V_L''(t)=\sum_{(s,c)}m_{s,c}
 \frac{8t^2-8st+3s^2-2c}{(2t^2-2st+c)^{5/2}}.
\end{equation}

The exact rational interval program encloses square roots between adjacent
rationals with denominator $10^{90}$.  It proves
\begin{equation}
 V_L'(0.62)<0<V_L'(0.63)
\end{equation}
and
\begin{equation}\label{eq:line-convex}
 V_L''([0.62,0.63])\subset
 [0.0467701569412027,0.227377521578737]\subset(0,\infty).
\end{equation}
There is a unique nondegenerate minimum in $(0.62,0.63)$.  Central symmetry
makes $V_L$ even and gives another in $(-0.63,-0.62)$.  Thus
\begin{equation}\label{eq:line-violation}
 a^0_{1/2,L}\geq2>1=\sharp_L^0.
\end{equation}
This disproves the proper-line formulation without using the degenerate
center of the regular truncated-octahedron potential.

\subsection{Suspension and exact dimension classification}

For $d\geq1$, work in
\begin{equation}
 \R^{d+2}=\R^3\times\R^{d-1},\qquad
 L_d=L\times\R^{d-1},
\end{equation}
and replace every site $B_v$ by $(B_v,0)$.  Every ambient Voronoi cell is
$S\times\R^{d-1}$.  Nearest-site sets and codimensions are unchanged,
and transversality follows by taking products of the tangent spaces with
$\R^{d-1}$.  For the lifted nearest sites, the projected
hull is exactly
\begin{equation}\label{eq:suspension-projected-hull}
 \conv\{\pi_{L_d}(B_i,0):B_i\in\NS(S)\}
 =\conv\{\pi_LB_i:B_i\in\NS(S)\}\times\{0\}.
\end{equation}
Its relative boundary is likewise the original relative boundary times
$\{0\}$.  These identities preserve relative effectiveness and
relative-boundary avoidance exactly.  The unique effective
zero-dimensional intersection becomes the unique effective intersection
of dimension $d-1$:
\begin{equation}
 \sharp_{L_d}^{d-1}=1.
\end{equation}

On $L_d$, write a point as $(t,z)$.  Then
\begin{equation}
 W(t,z)=\sum_v(q_v(t)+|z|^2)^{-1/2}.
\end{equation}
At either certified line minimum $(t_*,0)$, the Hessian is
\begin{equation}
 \operatorname{diag}\left(V_L''(t_*),
 -\left(\sum_vq_v(t_*)^{-3/2}\right)I_{d-1}\right).
\end{equation}
It has index $d-1$.  Hence
\begin{equation}
 a^{d-1}_{1/2,L_d}\geq2>1=\sharp_{L_d}^{d-1}.
\end{equation}
Since $\dim L_d=d$ and its codimension is two, this gives a proper
counterexample for every $d\geq1$.

If $\dim L=0$, then under the standard zero-manifold Morse convention a
generic point meets one full-dimensional cell, which is effective relative
to the point.  If the potential is defined there, the restriction is a
function on a singleton; its empty Hessian is nonsingular, so the point is
one nondegenerate critical point of index zero.  If the point is a charge
site, the restricted domain is empty.  Thus
$a_{\alpha,L}^0\leq\sharp_L^0=1$.  This proves the stated
zero-dimensional classification in Theorem~\ref{thm:main}.

\begin{remark}[Hyperplanes]
Embedding the ambient sites as $(A_i,0)$ and taking
$H=\R^3\times\{0\}\subset\R^4$ transfers the part-(a) violation unchanged.
If one additionally demands sites off $H$, move every site to the common
rational height $h\ne0$.  Voronoi comparisons and relative effectiveness
remain products, and for all sufficiently small $h$ the 18 nondegenerate
restricted points persist with their indices.  Thus that extra reading also
fails.
\end{remark}

\section{Surviving bounds}\label{sec:corrected}

\begin{theorem}[Configuration-dependent large exponent]\label{thm:large}
For each fixed GNS-generic positive configuration, there is an $\alpha_0$
depending on it such that, for $\alpha\geq\alpha_0$, ambient critical points
correspond to effective positive-codimension cells and their indices equal
the cell dimensions. The analogous fixed-pair relative theorem holds, with the local
maxima associated with effective full-dimensional cells whose sites lie off $L$ included.
\end{theorem}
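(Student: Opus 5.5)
The plan is to reduce the statement to GNS Theorem~1.7 after the sign normalization discussed in the subsection on what Theorem~1.7 proves. Then we re-derive the localization argument of GNS Sections~2.4--2.5 to confirm both the ambient and relative versions, including the restricted maxima. Fix a GNS-generic configuration with all $\zeta_i=1$. Write $V_\alpha=\sum_i\rho_i^{-\alpha}$ and set $m(x)=\min_i\rho_i(x)$. For large $\alpha$, consider $\alpha^{-1}\log V_\alpha$. Away from the sites this converges to $-\log m$ uniformly on compact sets. The gradient of $V_\alpha$ is dominated by the terms of the nearest sites, those with $\rho_i\le m(1+\delta)$; all other terms are exponentially small relative to them, uniformly outside a neighbourhood of the sites and of infinity.

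The first step is to localize critical points. Critical points of $V_\alpha$ satisfy $\sum_i \rho_i^{-\alpha-1}(x-A_i)=0$, so $x$ is a convex combination of the sites with weights $w_i\propto\rho_i^{-\alpha-1}$. As $\alpha\to\infty$ these weights concentrate on $\NS$ at the limit point. Any accumulation point of critical points as $\alpha\to\infty$ must therefore lie in $S\cap\conv\NS(S)$ for the cell $S$ containing it. Genericity makes $S\cap\conv\NS(S)$ a single point, the circumcenter, lying in the relative interior of the Delaunay simplex. Hence critical points for large $\alpha$ concentrate near the circumcenters of effective cells. Full-dimensional cells are excluded in the ambient case, because there the single-site term is singular at its site and has no critical point. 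Compactness requires a uniform bound at infinity; it comes from $\nabla V_\alpha\approx-2\alpha\ell x|x|^{-2\alpha-2}$ pointing inward.

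The second step is existence, uniqueness and index near each effective circumcenter $c$ of a codimension-$k$ cell with $k+1$ sites. We rescale $\alpha$ and use local coordinates. In the $k$ directions along the Delaunay simplex, the barycentric weights form a nondegenerate system: strict positivity of the barycentric coordinates, the genericity clause, lets the implicit function theorem produce a unique solution. Along the $n-k$ directions of $S$, each nearest term $\rho_i^{-\alpha}$ decreases, which gives negative curvature of order $\alpha$. Across $S$, the competing terms produce positive curvature of order $\alpha^2$ that dominates. The index is therefore $\dim S=n-k$, and the critical point is nondegenerate. A degree or Krawczyk-type argument in rescaled variables gives uniqueness.

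The relative case runs the same argument with $\pi_L$ and restricted Hessians. The projected hull condition replaces the ambient effectiveness condition, and transversality gives the nondegenerate coordinate split. The main obstacle is uniformity: showing that no spurious critical points appear near lower-dimensional junctions or near sites. One such case is a full-dimensional cell whose site lies off $L$. There $\rho_i|_L$ has a positive minimum at the projection of the site, and $\rho_i^{-\alpha}$ has a strict restricted local maximum there, which must be included in the count. I would handle this by a finite cover of a compact set by neighbourhoods of cell intersections, applying the exponential-domination estimate on each piece. The threshold $\alpha_0$ is the maximum of finitely many local thresholds, which is why it depends on the configuration.
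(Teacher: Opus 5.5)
Your proposal is correct and follows the paper's route. The paper's proof is just the citation of GNS Theorem~1.7 after normalizing the common sign to positive, together with GNS Section~2.5 for the restricted local maxima at effective full-dimensional cells whose sites lie off $L$, so your localization and index computation is a supplementary re-derivation of their argument rather than something the paper supplies. One small point in that sketch: the bound at infinity follows more cleanly from your own convex-combination identity, which places every critical point in the convex hull of the sites for every $\alpha$, whereas the asymptotic $\nabla V_\alpha\approx-2\alpha\ell x|x|^{-2\alpha-2}$ you invoke is not uniform in $\alpha$.
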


This is the fixed-configuration conclusion of GNS Theorem~1.7,
with the full-dimensional relative case treated in their
Section~2.5 included. Their proof supplies a configuration-dependent
threshold, not a uniform one.  No universal
explicit threshold follows from their proof.

\begin{proposition}[Extreme-index exclusions]
For positive unit charges in $\R^n$,
\begin{equation}
 \Delta |x-A|^{-2\alpha}
 =2\alpha(2\alpha+2-n)|x-A|^{-2\alpha-2}.
\end{equation}
Consequently,
\begin{equation}
\begin{array}{c|c}
\alpha<(n-2)/2&a_\alpha^0=0,\\
\alpha=(n-2)/2&a_\alpha^0=a_\alpha^n=0,\\
\alpha>(n-2)/2&a_\alpha^n=0.
\end{array}
\end{equation}
\end{proposition}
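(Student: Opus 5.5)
The plan has two parts: first compute the Laplacian of a single radial term, then use the sign of the total Laplacian to rule out critical points of extreme index.

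For the Laplacian, write $f(x)=g(r)$ with $r=|x-A|$ and $g(r)=r^{-2\alpha}$. Away from $A$ the radial formula
\[
\Delta f=g''(r)+\frac{n-1}{r}g'(r)
\]
applies. Here $g'=-2\alpha r^{-2\alpha-1}$ and $g''=2\alpha(2\alpha+1)r^{-2\alpha-2}$. Substituting gives
\[
\Delta f=2\alpha\bigl(2\alpha+1-(n-1)\bigr)r^{-2\alpha-2}=2\alpha(2\alpha+2-n)r^{-2\alpha-2},
\]
which is the displayed identity. The same result follows from the Hessian
\[
D^2 f=2\alpha r^{-2\alpha-2}\bigl((2\alpha+2)uu^T-I\bigr),\qquad u=(x-A)/r,
\]
by taking the trace. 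Summing over the unit charges, $\Delta V_\alpha$ is the sum of these terms. On the domain (sites excluded) every term is strictly negative if $\alpha<(n-2)/2$, identically zero if $\alpha=(n-2)/2$, and strictly positive if $\alpha>(n-2)/2$.

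The exclusions then follow from the trace of the Hessian at a critical point $x$, since $\operatorname{tr}D^2V_\alpha(x)=\Delta V_\alpha(x)$. The counts $a_\alpha^j$ refer to nondegenerate critical points, so an index-zero critical point has a positive definite Hessian and hence positive trace. An index-$n$ critical point has a negative definite Hessian and hence negative trace. The three cases are:
\begin{itemize}
\item If $\alpha<(n-2)/2$, the trace is negative everywhere, so there are no index-zero critical points and $a_\alpha^0=0$.
\item If $\alpha>(n-2)/2$, the trace is positive everywhere, so there are no index-$n$ critical points and $a_\alpha^n=0$.
\item If $\alpha=(n-2)/2$, the trace is zero everywhere, so both definite cases are excluded and $a_\alpha^0=a_\alpha^n=0$.
\end{itemize}
This is the same trace argument used for $n=3$, $\alpha=1/2$ in Section~\ref{sec:ambient}.

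I do not expect a real obstacle. The one point to fix is the convention for degenerate critical points. If the counts are meant to include degenerate points with Morse index defined as the number of negative eigenvalues, the argument still works, with one refinement. Index $n$ forces all eigenvalues negative and so a negative trace. Index zero only forces all eigenvalues nonnegative, so the trace is nonnegative, and this contradicts $\Delta<0$. The boundary case $\alpha=(n-2)/2$ would then need nondegeneracy for index zero. Following the paper's convention that all counted critical points are nondegenerate, I state the result under that assumption. I would also remark that this agrees with the paper's standing restriction to $\alpha>0$.
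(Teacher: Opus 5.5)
Your proof is correct, but it takes a different route from the paper.

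\textbf{The two routes.} The paper never looks at the Hessian trace. It observes that $V_\alpha$ is strictly superharmonic, harmonic, or strictly subharmonic on the punctured domain, and then applies the strong minimum/maximum principle to exclude local minima or local maxima. Your argument is pointwise linear algebra instead: at a critical point, $\operatorname{tr}D^2V_\alpha=\Delta V_\alpha$ has a fixed sign, and a definite Hessian of the opposite sign is incompatible with it.

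\textbf{What your route buys.} It is more elementary. It works directly with the Hessian-index definition of $a^j_\alpha$. It needs no strong maximum principle, and hence none of what that principle uses in the harmonic case: non-constancy of $V_\alpha$ and analyticity or unique continuation on the connected set $\R^n\setminus\{A_i\}$. As you note, in the two strict cases it also excludes degenerate points whose Hessian has zero (respectively $n$) negative eigenvalues.

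\textbf{What the paper's route buys.} Its conclusion concerns topological extrema rather than Hessians. At the harmonic exponent it also rules out degenerate local minima, which the trace test cannot see when the Hessian vanishes.

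\textbf{On your caveat.} The paper silently uses nondegeneracy to identify index-zero and index-$n$ critical points with local minima and maxima. So your caveat is genuinely needed in both arguments. The paper itself supplies a witness: the center of the symmetric EFO configuration is critical with $D^2V(0)=0$ at $\alpha=1/2=(3-2)/2$. It therefore has no negative Hessian directions, and it would violate $a^0_\alpha=0$ in the middle row if degenerate critical points were counted.
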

\begin{proof}
The potential is strictly superharmonic, harmonic, or strictly subharmonic,
respectively.  The strong minimum/maximum principle excludes the listed
extrema.  Restriction to an affine subspace with off-subspace sites need not
preserve this sign, which is why the line example can have minima at the
ambient harmonic exponent.
\end{proof}

For a line restriction, grouping identical projected centers and heights
and deleting zero coefficient sums gives
\begin{equation}
 F(t)=\sum_{r=1}^m c_r((t-a_r)^2+b_r^2)^{-\alpha}.
\end{equation}
If $m\geq1$, Duan~\cite[Theorem~2.1]{DuanLine} and
Oliveira~\cite[Theorem~1]{OliveiraLine} give at most $2m-1$ distinct
critical points in the domain for every $\alpha>0$; Oliveira writes
$p=2\alpha$.  Duan also counts analytic multiplicity when all surviving
heights are positive; with zero heights, the bound counts distinct points
globally across the punctured domain.  If $m=0$, the restriction is
identically zero.  These results concern the nonconstant form of GNS
Conjecture~1.9 and include EFO's positive-charge Conjecture~3~\cite[p.~3]{EFO}, not the
index-by-index relative inequality of Conjecture~1.8(b).  For a positive
Morse restriction with all $b_r>0$, the extrema alternate, beginning and
ending with a maximum, so
\begin{equation}
 a_L^1=a_L^0+1,
 \qquad a_L^1\leq m,
 \qquad a_L^0\leq m-1.
\end{equation}
This sharp total bound in terms of the number of distinct grouped terms does not imply the false componentwise
relative Voronoi bound. In the line example $m=8$, so the total bound is 15 and allows at most 
seven local minima, whereas $\sharp_L^0=1$.

Writing $N_\ell(n,\alpha)$ for the maximal number of critical points over
all nondegenerate configurations with $\ell$ variable real charges, GNS
Theorem 1.5 also gives the coarse uniform estimate
\begin{equation}
 N_\ell(n,\alpha)\leq4^{\ell^2}(3\ell)^{2\ell}
\end{equation}
for nondegenerate configurations.  No universal additive or multiplicative
repair, configuration-independent $\alpha_0$, or persistence monotonicity
theorem is proved here.  The examples force any multiplicative replacement
to allow at least $9/7$ in ambient index one and at least 2 in relative index
zero.

Arathoon, Ball, and Kvalheim~\cite[Theorem~1]{ABK} disprove Maxwell's
bound using five positive charges of unequal strengths. Their
example is not a direct counterexample to the unit-charge
inequalities considered here.

\section{Independent verification}\label{sec:audit}

Two further algorithms reconstruct the certificates in addition to the primary generators.

\begin{enumerate}
\item \path{independent_verify.py} repeats the site coordinates, solves exact
  rational circumcenter equations, and compares squared distances directly
  rather than trusting the oriented in-sphere convention.  It then uses a
  320-bit explicit-adjugate interval-Newton operator~\cite[Theorem~13.2]{Rump} rather than the primary
  fixed-preconditioner Krawczyk operator.  It reproduces all Delaunay and
  effectiveness counts and all 18 index-one boxes.
\item \path{independent_adversarial_verifier.py} performs a second exact
  circumcenter reconstruction and a 512-bit Krawczyk replay with newly
  constructed rational preconditioners.  It compares outward-rounded image
  endpoints directly with the exact rational box endpoints.  It also
  reconstructs the complete line envelope and checks $V_L''>0$ on ten
  rational subintervals; its smallest lower bound is greater than $0.10609$.
\end{enumerate}

\section{Reproduction and limitations}\label{sec:reproduction}

From the project root, run
\begin{verbatim}
python3 research/certification/exact_voronoi_certificate.py
python3 research/certification/critical_point_certificate.py
python3 research/certification/verify_relative_line_counterexample.py
python3 research/certification/independent_verify.py
python3 research/certification/independent_adversarial_verifier.py
\end{verbatim}
The primary run used Python 3.13.9, python-flint 0.9.0, Arb at 256 bits, and
mpmath 1.3.0 only for noncertifying center refinement and preconditioner
construction.  Independent runs use
320- and 512-bit Arb.  No random seed is used in verification.  The JSON files contain all
sites, Delaunay cells, effective lists, rational boxes, interval images,
Hessian determinant enclosures, and exact line bounds.  In the relative-line
JSON, the top-level decimal intervals are outward-rounded displays of the exact rational bounds; the
certificate uses the fractions in \texttt{exact\_rational\_bounds}.
Whole-file JSON hashes reproduce byte-for-byte on the recorded environment;
the JSON also records Python and platform strings, so runs on another
environment should compare the mathematical payload and theorem conclusions.
Recorded hashes and this distinction are listed in the accompanying
supplementary package's \path{SHA256SUMS} and \path{README.md}.

\paragraph{Data and code availability.}
The accompanying supplementary package contains the five verification
scripts, exact input data, rational critical-point boxes, four reference
JSON outputs, dependency versions, and reproduction instructions.  No
public repository deposit is claimed here.

The ambient calculation isolates only the 18 points needed for the strict
violation; it does not enumerate every critical point.  The displayed
rational vector is GNS-generic and all 18 counted points are Morse.  The
density argument proves a rational all-Morse vector exists in the same open
chamber but does not print that second vector.  The computation trusts exact
Python integer/rational arithmetic and Arb outward rounding; two independent
algorithms reduce implementation risk, but this is not a formally
kernel-checked proof-assistant development.  None of these limitations leaves
the truth value of either printed conjecture open.

\section{Logical summary}

\begin{center}
\begin{tabularx}{0.96\linewidth}{@{}>{\raggedright\arraybackslash}X>{\raggedright\arraybackslash}X@{}}
\toprule
Input & Consequence\\
\midrule
GNS Theorem 1.7 & Effective-cell equality for each fixed configuration on a sufficiently-large-$\alpha$ tail.\\
Rational ambient certificate: $18>14$ & Part (a) false; under the inclusive convention, part (b) also fails for $L=\R^3$.\\
Ambient hyperplane lift & Proper-hyperplane version false.\\
Rational proper-line certificate: $2>1$ & Proper-line version false.\\
Product suspension & For each $d\geq1$, a proper $d$-plane counterexample in $\R^{d+2}$.\\
Zero-dimensional lemma under the standard zero-manifold Morse convention & Relative version true for $\dim L=0$.\\
\bottomrule
\end{tabularx}
\end{center}

The large-exponent theorem and the finite-endpoint counterexamples are
compatible: bifurcations must occur before the configuration-dependent tail.

\section*{AI usage statement}
ChatGPT 6 Pro was used to assist with the exploratory search for the configuration, development of the verification scripts, 
and revision of the manuscript. Claude Fable 5.1 was also used to assist with revising the manuscript. 
The author reviewed the mathematical arguments, ran and checked the computational certificates, and takes responsibility for the content of the paper.

\end{document}